\DeclareMathOperator*{\minimize}{minimize}
\DeclareMathOperator*{\subjectto}{subject\ to}
\newcommand{\norm}[1]{\lVert #1 \rVert}
\theoremstyle{plain}
\newtheorem{thm}{Theorem}
\newtheorem{lem}[thm]{Lemma}
\newtheorem{defn}[thm]{Definition}
\newtheorem{assum}[thm]{Assumption}
\newtheorem{cor}[thm]{Corollary}
\newtheorem{prob}[thm]{Problem}
\newtheorem{prop}[thm]{Proposition}
\title{Finite-time Control of Discrete-time Positive Linear Systems \\ via Convex Optimization  }
\author{Chengyan Zhao${}^{1\dagger}$, Masaki Ogura${}^{2}$, and Kenji Sugimoto${}^{1}$}
\abstract{%
In this paper, we study a class of finite-time control problems for discrete-time positive linear systems with time-varying state parameters. Although several interesting control problems appearing in population biology, economics, and network epidemiology can be described as the class of finite-time control problems, an efficient solution to the control problem has not been yet found in the literature. In this paper, we propose an optimization framework for solving the class of finite-time control problems via convex optimization. We illustrate the effectiveness of the proposed method by a numerical simulation in the context of dynamical product development processes. }
\keywords{Positive linear systems, time-varying systems, finite-time control, geometric programming, convex optimization.
}
\begin{document}

\maketitle


\section{Introduction}

The concept of finite-time stability~\cite{Dorato1961}, which is concerned with the stability property of dynamical systems over a finite time window, is of practical importance due to its effectiveness in solving realistic control problems appearing in several fields including robotics~\cite{Yu2005}, spacecraft control~\cite{Du2011}, and multi-agent systems~\cite{Srinivasan2018}. We find in the literature several advances in the field; for example, Bhat and Bernstein~\cite{ref3} proposed a finite-time stability criteria for continuous-time autonomous systems. Amato et al.~\cite{ref2} proposed a sufficient condition for finite-time stability and control for time-invariant linear systems through the Lyapunov function approach. Hong~\cite{Hong2002} considered the finite-time control and stabilizability for a class of controllable systems. The authors in~\cite{Zhang2012,Haddad2008} studied a finite-time synthesis problem for the nonlinear systems.

Recently, finite-time control problems have been actively investigated in the context of positive systems~\cite{Farina2000}, which are dynamical systems whose state variables are confined to be within the positive orthant and naturally arise in various application areas including  pharmacology~\cite{Hernandez-Vargas2013}, epidemiology~\cite{Nowzari2015a,Ogura2015c}, and communication networks~\cite{Shorten2006}. For example, the authors in~\cite{Chen2012} derived a necessary and sufficient condition for the finite-time stability of switched positive linear systems by using the co-positive Lyapunov approach. Colaneri et al.~\cite{Colaneri2014} established the convexity of the norm of the state variable of a class of positive time-varying linear systems with respect to the diagonal entries of the state matrix. However, the practical applicability of this convexity result is not necessarily enough to cover some applications of positive linear systems because the convexity property is limited to the  diagonals of the state matrix of the system, as shall be discussed later in this paper. Furthermore, there is a lack of frameworks for considering the cost associated with control input such as the one for chemical~\cite{Ogura2017} and medical~\cite{Kohler2018} interventions.

Extending the framework in~\cite{OguraGP} for linear time-invariant positive systems, in this paper we propose an optimization framework to solve a class of finite-time control problems for discrete-time \emph{time-varying} positive linear systems. We formulate the finite-time control problem as an optimization problem, in which the parameter cost as well as the performance evaluation function are described by posynomial functions~\cite{ref4}. We then show that the finite-time control problem can be transformed into a geometric program, which can be efficiently solved via convex optimization. In the derivation of these results, we do not restrict the tunable entries of the system matrix to its diagonals; therefore, the contribution of this paper lies in showing a form of convexity of the problem with respect to \emph{any} of the entries of the state matrix.

This paper is organized as follows. After introducing necessary mathematical notations, in Section~\ref{section2}, we formulate the finite-time control problem studied in this paper. In Section~\ref{main_result}, we introduce our assumptions on the system and cost functions and, then, state our main result. Finally, in Section~\ref{simulation}, we illustrate the effectiveness of our results by solving the optimal resource allocation problem that arises in the context of managing product development processes.

The following notations are used in this paper. Let ${\mathbb{R}}$, ${\mathbb{R}}_+$, ${\mathbb{R}}_{++}$ denote the set of real, nonnegative, and positive numbers, respectively. Let $\mathbb{N}$ denote the set of positive integers. We let the entrywise logarithm operation $\log[\cdot] \colon {{\mathbb{R}}}^{n}_{++} \to {\mathbb{R}}^{\mathnormal n} $ be defined by $(\log[v])_i = \log{v_i}$ for all $i \in \{1, \dotsc, n\}$. Likewise, we define the entrywise exponentiation $\exp[\cdot] \colon {\mathbb{R}}^{\mathnormal n} \to {\mathbb{R}}^{\mathnormal n}_{++}$ in the same manner. We say that a matrix is nonnegative if all the entries of the matrix are nonnegative.

\section{Finite-time control problem}\label{section2}

In this section, we describe the finite-time control problem studied in this paper. In Section \ref{section2_1}, we describe the system studied in this paper and state the necessary assumptions. In Section \ref{section2_2}, we formulate the finite-time control problem as an optimization problem.

\subsection{System Model}\label{section2_1}

In this paper, we consider the following parametrized time-varying linear system defined on a finite time interval: 
\begin{equation*}
\Sigma_\theta :  x(k+1) = (A(k) + K(k;\theta(k))) x(k),~k=0, \dotsc, T,
\end{equation*}
where $x(k) \in {\mathbb{R}}^{n}$ is the state variable, $A(k) \in {\mathcal A} \subset {\mathbb{R}}^{n \times n}$ ($k=0, \dotsc, T$) is a time-varying state matrix, and $$K(k;\theta(k)) \in {\mathcal K} \subset {\mathbb{R}}^{n \times n},\  k=0, \dotsc, T,$$ is the control matrix parametrized by the vector~$\theta(k)$  belonging to a set~$\Theta \subset {\mathbb{R}}^{{\mathnormal n_\theta}}$. We assume that the set~${\mathcal K}$ is bounded. Our objective in this paper is to present an optimization framework for tuning the parameter~$\theta(k)$ in such a way that the finite-time stability of the system is guaranteed, under the positivity assumption on the system. 
The positivity of discrete-time time-varying linear systems is formally defined as follows.

\begin{defn}\cite{Farina2000}\label{den:positive}
	We say that the time-varying linear system
	\begin{equation*}
	\Sigma: x(k+1) = M(k) x(k)
	\end{equation*}
	is (\emph{internally}) \emph{positive} if for any initial condition~$x(0)$ with nonnegative entries, the corresponding state trajectory $x(k)$ is nonnegative for all $k\geq 0$. 
\end{defn}

For positive time-varying linear systems, we define the notion of finite-time stability \cite{ref1} as follows.

\begin{defn} \label{defn_FTS}
	Let $T$ be a positive integer. 
	Suppose that a positive number $\epsilon$ as well as positive vectors~$v$ and $\ell(k)$ ($k \in \{1, \dotsc, T\}$) are given. We say that 
	$\Sigma$ is \emph{finite-time stable} if the trajectory of the system satisfies 
	\begin{equation*}
	x^\top(k)\ell(k) < \epsilon,~k=1, \dotsc, T,
	\end{equation*}	
	for all initial states $x(0)$ satisfying~$x^\top(0)v \leq 1$. 
\end{defn}

In this paper, we place the following assumption on the parameterized system $\Sigma_\theta$ for ensuring its positivity. 

\begin{assum}\label{assum:positivity}
	The matrix $A+K$ is nonnegative for all $A \in \mathcal{A}$ and $K \in \mathcal{K}$.
	%
\end{assum} 


We then introduce cost and performance functions as follows.
For each control action $K(k;\theta(k))$, the control parameter $\theta(k)$ at time $k$ comes with an associated cost. In this paper, we suppose that a cost function for tuning the parameter~$\theta(k)$ is given by the following functional:
	\begin{equation*}
L \colon \mathbb R^{\mathnormal {n_\theta}} \to {\mathbb{R}} \colon \mathnormal{\theta(k)} \mapsto L(\theta(k)).
\end{equation*}

Let $x(\cdot;\theta(k))$ denote the solution of the system~$\Sigma_\theta$. In order to measure the stability property of the system, we use the functional  
\begin{equation*}
J(\theta(k)) = \norm{x(\cdot;\theta(k))}_p  
\end{equation*}
where $p>0$ is a constant and $\norm{\cdot }_p$ denotes the  $\ell^p$-norm of a sequence of real vectors.

\subsection{Problem Formulation}\label{section2_2}

In this section, we present two types of optimization problems for the finite-time control of the parametrized system~$\Sigma_\theta$. We first present the budget-constrained optimization problem to minimize $J$ while satisfying the constraint on the cost function $L$ as well as the finite-time stability. Formally, the budget-constrained optimization problem is stated as follows:

\begin{prob}\label{opt_problem}
	Let a constant $\bar L$ be given. 
	Find a sequence of variables $\theta=\{\theta(k)\}_{k=0}^T$ such that $$L(\theta)\leq \bar L$$ and the system $\Sigma_\theta$ is finite-time stable in the sense of Definition~\ref{defn_FTS}, while minimizing the cost function~$J(\theta)$. 
\end{prob}

Mathematically, the budget-constrained finite-time control problem can be stated as 
\begin{subequations}\label{eq:opt_prob:}
	\begin{align}
	&\minimize_{\theta \in \Theta^{T+1}} ~~~ J(\theta) \label{eq:opt_prob:A}\\
	&	\subjectto  ~~ x^\top(k;\theta(k))\ell(k) < \epsilon, \label{eq:opt_prob:B} \\
	& \quad	\quad \quad \quad \quad ~L(\theta) \leq \bar{L}. \label{eq:opt_prob:C}
	\end{align}
\end{subequations}

Likewise, by exchanging the roles of the objective function and constraints in the budget-constrained optimization problem, we obtain the performance-constrained optimization problem
\begin{prob}\label{opt_problem:performance_constrained}
	Let a constant $\bar J$ be given. 
	Find a sequence of variables $\theta=\{\theta(k)\}_{k=0}^T$ such that $$J(\theta)\leq \bar J$$ and the system $\Sigma_\theta$ is finite-time stable in the sense of Definition~\ref{defn_FTS}, while minimizing the cost function~$L(\theta)$. 
\end{prob}

As in \eqref{eq:opt_prob:}, we can mathematically formulate  the performance-constrained finite-time control problem as the following:
\begin{equation*}
	\begin{aligned}
	&\minimize_{\theta \in \Theta^{T+1}} ~~~ L(\theta)\\
	&	\subjectto  ~~ x^\top(k;\theta(k))\ell(k) < \epsilon,  \\
	& \quad	\quad \quad \quad \quad ~J(\theta) \leq \bar{J}. 
	\end{aligned}
\end{equation*}


\section{Main result}\label{main_result}

In this section, we present our optimization framework for solving the budget-constrained and performance-constrained finite-time control problems. Under proper assumptions, we show that the problems can be transformed into convex optimization problems.

In Problems~\ref{opt_problem} and~\ref{opt_problem:performance_constrained}, the functional $J(\theta)$ is typically a nonlinear function. Furthermore, the cost functional $L(\theta)$ is often nonlinear in applications due to the their physical characteristics such as the dosage-effect relation in the therapy control processes. For these reasons, Problems~\ref{opt_problem} and~\ref{opt_problem:performance_constrained} are not trivial to solve directly. However, in this paper, we show that a mild set of assumptions allow us to reduce the problems to geometric programming, which can be efficiently solved via convex optimization~\cite{ref4}. 

Let us first give a brief overview of geometric programming. We start from stating the following definition. 

\begin{defn}\cite{ref4}\label{defn_posynomial}	Let $v_1$, $\dotsc$, $v_n$ denote $n$ real positive variables.
	\begin{enumerate}
		\item  We say that a real function $g(v)$  is a {\it monomial} if there exist $c>0$ and $a_1, \dotsc, a_n \in {\mathbb{R}}$ such that $g(v) = c v_{\mathstrut 1}^{a_{1}} \dotsm v_{\mathstrut n}^{a_n}$.
		\item We say that a real function $f(v)$ is a {\it posynomial} if $f$ is a sum of monomials of $v$.
	\end{enumerate}
\end{defn}

The following lemma shows the log-convexity of posynomials.
\begin{lem}\cite{ref4}\label{log:convexity}
	Let $f \colon {\mathbb{R}}_{+}^{\mathnormal n} \to {\mathbb{R}}_{+} \colon \mathnormal x \mapsto f(x)$ be a posynomial function. Then, the function 
	\begin{equation*}
	F \colon {\mathbb{R}}^{\mathnormal n} \to {\mathbb{R}} \colon \mathnormal w \mapsto \log f(\textup{exp}[w])
	\end{equation*} 
	is convex. 
\end{lem}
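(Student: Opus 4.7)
The plan is to reduce the claim to the convexity of the log-sum-exp function by decomposing the posynomial $f$ into its monomial summands.

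First, I would treat the monomial case. If $g(v) = c v_1^{a_1}\dotsm v_n^{a_n}$ with $c>0$, then a direct substitution $v = \exp[w]$ followed by taking logarithms gives
\begin{equation*}
\log g(\exp[w]) = \log c + a_1 w_1 + \dotsb + a_n w_n,
\end{equation*}
which is an affine function of $w$, hence convex.

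Next, for a general posynomial $f = \sum_{k=1}^{K} g_k$ with each $g_k$ a monomial, I would write
\begin{equation*}
F(w) = \log f(\exp[w]) = \log \sum_{k=1}^{K} \exp\bigl[h_k(w)\bigr],
\end{equation*}
where $h_k(w) := \log g_k(\exp[w])$ is affine in $w$ by the previous step. Thus $F$ is the composition of the log-sum-exp function $\mathrm{lse}(y_1,\dotsc,y_K) := \log \sum_k \exp(y_k)$ with an affine map $w \mapsto (h_1(w), \dotsc, h_K(w))$. Since composition with an affine map preserves convexity, it suffices to establish that $\mathrm{lse}$ is convex on $\mathbb{R}^K$.

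The main obstacle is therefore the convexity of $\mathrm{lse}$, which is the genuinely nontrivial step. I would prove it by a Hessian computation: setting $s = \sum_k \exp(y_k)$ and $z_k = \exp(y_k)/s$, a direct differentiation yields
\begin{equation*}
\nabla^2 \mathrm{lse}(y) = \mathrm{diag}(z) - zz^\top,
\end{equation*}
where $z = (z_1,\dotsc,z_K)^\top$ has nonnegative entries summing to one. For any $u \in \mathbb{R}^K$, the Cauchy--Schwarz inequality applied to the vectors with components $\sqrt{z_k}$ and $u_k \sqrt{z_k}$ gives
\begin{equation*}
\Bigl(\sum_k z_k u_k\Bigr)^2 \leq \Bigl(\sum_k z_k\Bigr)\Bigl(\sum_k z_k u_k^2\Bigr) = \sum_k z_k u_k^2,
\end{equation*}
so that $u^\top \nabla^2 \mathrm{lse}(y) u = \sum_k z_k u_k^2 - (\sum_k z_k u_k)^2 \geq 0$. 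Hence $\nabla^2 \mathrm{lse}$ is positive semidefinite everywhere, $\mathrm{lse}$ is convex, and the composition argument above concludes the proof.
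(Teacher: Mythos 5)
Your argument is correct and complete: the monomial-to-affine reduction, the identification of $F$ with a log-sum-exp composed with an affine map, and the Hessian/Cauchy--Schwarz verification of log-sum-exp convexity are all sound. The paper itself gives no proof of this lemma --- it is cited from the geometric-programming tutorial of Boyd et al.~\cite{ref4} --- and your proof is essentially the standard argument found in that reference, so there is nothing further to reconcile.
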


The log-convexity of posynomials allows us to solve a class of optimization problems called geometric programs efficiently, as summarized in the following proposition~\cite{ref4}.

\begin{prop}\label{proposition}
	Let $g_1(\theta), \dotsc, g_q(\theta)$ be monomials and $f_0(\theta), \dotsc, f_p(\theta)$ be posynomials. Assume that variables $\theta \in \Theta$ satisfy Definition \ref{defn_posynomial}. We say that the following optimization problem 
	\begin{subequations}
		\begin{align*}
		&\minimize_{\theta \in \Theta} ~~~ f_0(\theta) \\
		&	\subjectto  ~~ f_i(\theta) \leq 1,~i =1, \dotsc, p, \\
		&	\quad \quad\quad\quad\quad~ g_j(\theta) = 1,~j =1, \dotsc, q, 
		\end{align*}
	\end{subequations}	
	can be transformed into a convex optimization problem through the logarithmic variable transformation
	\begin{equation*}
	\theta=\textup{exp}[z],~z \in \Gamma \subset {\mathbb{R}}^{\mathnormal m}.
	\end{equation*}
	Then, we obtain the convex optimization problem with the following form:
	\begin{subequations}
		\begin{align*}
		&\minimize_{z \in \Gamma} ~~~ \log f_0(\textup{exp}[z]) \\
		&	\subjectto  ~~ \log f_i(\textup{exp}[z]) \leq 0,~i =1, \dotsc, p, \\
		&\quad \quad\quad\quad\quad~ \log g_j(\textup{exp}[z]) = 0,~j =1, \dotsc, q.
		\end{align*}
	\end{subequations}
\end{prop}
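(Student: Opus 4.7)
The plan is to apply the logarithmic change of variables $\theta = \exp[z]$ directly to the original geometric program and verify that, after also applying $\log$ to the objective and to both sides of each inequality/equality constraint, every ingredient of the resulting problem is either convex (for the objective and inequality constraints) or affine (for the equality constraints). Since $\log$ is strictly increasing on $\mathbb{R}_{++}$, the transformed problem has the same optimal solutions as the original one, so no information is lost in the transformation.

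First, I would observe that the substitution turns the original feasible set into $\{z \in \Gamma : f_i(\exp[z]) \le 1,\ g_j(\exp[z]) = 1\}$, which is equivalent to $\{z \in \Gamma : \log f_i(\exp[z]) \le 0,\ \log g_j(\exp[z]) = 0\}$ because $\log(1) = 0$ and $\log$ is monotone. Similarly, minimizing $f_0$ over positive $\theta$ is equivalent to minimizing $\log f_0(\exp[z])$ over $z$. This reduces the verification to showing convexity of the transformed functions.

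Next, for the posynomial terms $f_0, f_1, \dotsc, f_p$, I would invoke Lemma~\ref{log:convexity} directly: since each $f_i$ is a posynomial, the composition $w \mapsto \log f_i(\exp[w])$ is convex. For the monomial equalities, I would carry out the short explicit computation: if $g_j(\theta) = c\, \theta_1^{a_1} \dotsm \theta_m^{a_m}$ with $c > 0$, then
\begin{equation*}
\log g_j(\exp[z]) = \log c + a_1 z_1 + \dotsb + a_m z_m,
\end{equation*}
which is affine in $z$, so the constraint $\log g_j(\exp[z]) = 0$ defines a hyperplane in $\mathbb{R}^m$.

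Finally, I would conclude by assembling the pieces: the transformed problem minimizes a convex function over the intersection of sublevel sets of convex functions and affine hyperplanes, which is by definition a convex optimization problem. I do not expect any serious obstacle here; the only subtlety worth flagging is checking that the substitution $\theta = \exp[z]$ is a bijection between $\Theta \subset \mathbb{R}_{++}^m$ and its image $\Gamma = \log[\Theta] \subset \mathbb{R}^m$, which follows from $\Theta$ lying in the positive orthant (as required by Definition~\ref{defn_posynomial}) together with the monotonicity of $\log$.
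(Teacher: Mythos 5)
Your proof is correct and is exactly the standard argument for converting a geometric program into a convex program; the paper itself gives no proof of this proposition (it is stated as a summary of the cited GP tutorial), and your reasoning --- monotonicity of $\log$ to preserve the feasible set and optimizers, Lemma~\ref{log:convexity} for the objective and posynomial inequality constraints, and the explicit computation showing $\log g_j(\exp[z])$ is affine so the equalities become hyperplanes --- reproduces precisely the argument that the citation relies on. The only point worth adding is that for the transformed problem to be a genuine convex program one also needs the set $\Gamma=\log[\Theta]$ itself to be convex, which holds in the paper's setting because Assumption~\ref{asm:K} describes $\Theta$ by posynomial sublevel constraints, so Lemma~\ref{log:convexity} applies to those defining inequalities as well.
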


To exploit the log-convexity of posynomials, we first place the following assumption on the structure of the parametrized time-varying linear system~$\Sigma_\theta$:

\begin{assum}\label{K_min}
	Define the matrix $K_{\min} \in {\mathbb{R}}^{\mathnormal n\times n}$ by 
	$$[K_{\min}]_{ij}=\inf\{K_{ij}\colon K \in {\mathcal K}\}.$$ Then, the matrix
	\begin{equation}\label{eq:def:tildeA}
	\tilde{A}(k)=A(k)+K_{\min}
	\end{equation}
	is nonnegative for all~$k$.
\end{assum}

We remark that, in Assumption~\ref{K_min}, the existence of the matrix~$K_{\min}$ is guaranteed by the boundedness of the set~${\mathcal K}$. Furthermore, this assumption is not very restrictive and is satisfied in the examples that we discuss in Section 4.

Using the matrix~$\tilde A$ in \eqref{eq:def:tildeA}, we rewrite the parametrized system $\Sigma_\theta$ as
$$\Sigma_\theta :  x(k+1) = (\tilde{A}(k) + \tilde{K}(k;\theta(k))) x(k),~k \in \{0, \dotsc, T\}, $$
where
\begin{equation*}
\tilde K(k; \theta(k)) = K(k; \theta(k)) - K_{\min}
\end{equation*}
is a nonnegative matrix. For this nonnegative matrix as well as the parameter space~$\Theta$, we place the following assumption. 

\begin{assum}\label{asm:K}
	The following conditions hold true:
	\begin{enumerate}
		\item 
		There exist a sequence of the posynomials $f_1(\theta)$, \dots, $f_N(\theta)$ such that 
		\begin{equation*}\label{eq:def:matlhcallV}
		\Theta = \{ \theta \in {\mathbb{R}}_{++}^{\mathnormal m} \colon  f_1(\theta)\leq 1, \dotsc, f_N(\theta)\leq 1 \}.
		\end{equation*}
		\item There exist posynomials $\kappa_{ij}(k; \theta(k))$~$(i, j \in \{1, \dotsc, n\}$ and $k \in \{0, \dotsc, T\}$) such that
		\begin{equation*}\label{eq:def:mathcalK}
		\tilde K(k; \theta(k)) = \{ [\kappa_{ij}(k; \theta(k))]_{i, j} \colon \theta \in \Theta\}
		\end{equation*}
		\item $L(\theta)$ is a posynomial. 
	\end{enumerate}
\end{assum}
\vspace{1mm}


We can now present the first main result of this paper; namely, we can show that Problem \ref{opt_problem} can be solved via convex optimization. 

\begin{thm}\label{thm_1}
	The solution of the following convex optimization problem is given by $z=\{z(k)\}_{k=0}^T$, where $z(k)$  belongs to the set~$\Gamma \subset {\mathbb{R}}^{\mathnormal m}$.
	\begin{subequations}\label{eq:opt_framework:}
		\begin{align}
		&\minimize_{z \in \Gamma^{T+1}} ~~~ \log J(\textup{exp}[z]) \label{eq:opt_framework:A}\\
		&	\subjectto  ~~ \log x^\top(k; \textup{exp}[z(k)])\ell(k) < \log \epsilon, \label{eq:opt_framework:B} \\
		& \quad	\quad \quad \quad \quad ~ \log L(\textup{exp}[z]) \leq \log \bar{L}.  \label{eq:opt_framework:C}
		\end{align}
	\end{subequations}
	Then, the solution of Problem \ref{opt_problem} is given by 
	\begin{equation}\label{exp:trans}
	\theta(k)=\textup{exp}[z(k)].
	\end{equation}
\end{thm}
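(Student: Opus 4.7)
The plan is to show that, under Assumptions~\ref{assum:positivity}--\ref{asm:K}, Problem~\ref{opt_problem} fits the geometric-programming template of Proposition~\ref{proposition}, so that applying the logarithmic change of variables $\theta(k)=\exp[z(k)]$ automatically yields the convex reformulation \eqref{eq:opt_framework:}.

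The core step is to show, by induction on $k$, that every entry $x_i(k;\theta)$ of the state trajectory is a posynomial in the stacked variables $\theta(0), \dotsc, \theta(k-1)$. For $k=0$ the entries are positive constants and hence monomials. For the inductive step, I rewrite the rewritten dynamics coordinatewise as
\begin{equation*}
x_i(k+1) = \sum_{j=1}^n \bigl([\tilde{A}(k)]_{ij} + \kappa_{ij}(k;\theta(k))\bigr)\,x_j(k;\theta),
\end{equation*}
where the coefficient $[\tilde{A}(k)]_{ij}$ is a nonnegative constant by Assumption~\ref{K_min}, the map $\kappa_{ij}(k;\theta(k))$ is a posynomial by Assumption~\ref{asm:K}, and $x_j(k;\theta)$ is a posynomial by the inductive hypothesis. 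Since posynomials are closed under addition, multiplication, and addition of nonnegative constants, each $x_i(k+1;\theta)$ is again a posynomial. Consequently $x^\top(k;\theta(k))\ell(k)$, being a positive linear combination of the $x_i(k;\theta)$, is a posynomial in $\theta$, so the finite-time stability inequality \eqref{eq:opt_prob:B} is equivalent to the posynomial inequality $\epsilon^{-1} x^\top(k;\theta(k))\ell(k) < 1$. The budget constraint \eqref{eq:opt_prob:C} becomes $\bar{L}^{-1} L(\theta) \le 1$ by Assumption~\ref{asm:K}.3, and the feasible set $\Theta^{T+1}$ is already carved out by posynomial inequalities via Assumption~\ref{asm:K}.1.

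Once the objective $J(\theta)$ is likewise shown to be posynomial, Proposition~\ref{proposition} together with Lemma~\ref{log:convexity} closes the argument: the logarithmic substitution $\theta=\exp[z]$ turns every posynomial inequality into a convex one, the minimization of $\log J(\exp[z])$ becomes convex, and the result is precisely \eqref{eq:opt_framework:}; the relation \eqref{exp:trans} then inverts the change of variables and returns the optimizer of Problem~\ref{opt_problem}. The main obstacle will be the verification that $J(\theta)=\norm{x(\cdot\,;\theta)}_p$ is actually a posynomial on the relevant range of $p$. When $p$ is a positive integer, closure of posynomials under products implies that $\norm{x(\cdot\,;\theta)}_p^p$ is a posynomial, and since $t\mapsto\log t$ is monotone one may equivalently minimize $p^{-1}\log J^p$ without changing the optimizer. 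For non-integer $p$, an epigraph reformulation with a fresh positive variable $t$ and a posynomial upper bound of the form $\norm{x(\cdot\,;\theta)}_p^p\le t^p$ would likely be needed in order to remain within the class covered by Proposition~\ref{proposition}.
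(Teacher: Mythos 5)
Your proof follows essentially the same route as the paper's: both establish that the entries of $x(k;\theta)$ are posynomials from the product form of the rewritten dynamics $(\tilde A + \tilde K)$ (your induction is just the careful version of the paper's one-line expansion of $x(k;\theta(k))$), and then invoke Lemma~\ref{log:convexity} and Proposition~\ref{proposition} under the logarithmic substitution to obtain \eqref{eq:opt_framework:}. The one place you go beyond the paper is the posynomiality of $J(\theta)=\norm{x(\cdot\,;\theta)}_p$, which the paper simply asserts; your remark that for integer $p$ one may equivalently minimize $p^{-1}\log J^p$ by monotonicity of the logarithm, and that non-integer $p$ requires an epigraph reformulation, patches a genuine imprecision in the published argument.
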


\begin{proof}    
	Under the $\log$ transformation  $z(k)=\log [{\theta(k)}]$, the constraints \eqref{eq:opt_prob:A}, \eqref{eq:opt_prob:B} and \eqref{eq:opt_prob:C} are equivalent to the constraints \eqref{eq:opt_framework:A}, \eqref{eq:opt_framework:B} and \eqref{eq:opt_framework:C}, respectively. Therefore, the solution of Problem \ref{opt_problem} given by \eqref{exp:trans} becomes the solutions of optimization problem \eqref{eq:opt_framework:}.
	From Lemma \ref{log:convexity}, we can get that \eqref{eq:opt_framework:C} is convex if the cost function $L(\theta)$ follows the posynomials. Also, for the convexity of \eqref{eq:opt_framework:B}, $x^\top(k;\theta(k))\ell(k)$ is a linear function which is definitely a posynomial function. For the convexity of \eqref{eq:opt_framework:A}, we can derive that the entry of state vector is posynomials from the expansion of $x(k;\theta(k))$: 
	\begin{equation*}
	\begin{aligned}
	x(k; \theta(k))=(\tilde{A}(k-1)&+\tilde{K}(k-1; \theta(k-1))) \cdots \\ &(\tilde{A}(0)+\tilde{K}(0; \theta(0)))x(0).
	\end{aligned}
	\end{equation*}
	From the previous assumptions, we can see that if the performance measurement function follows posynomials,
	$J(\theta)$ is convex under the log transformation. From Proposition \ref{proposition}, we can see that Theorem \ref{thm_1} is a convex optimization problem.	
\end{proof}

Likewise, the performance-constrained form of finite-time control problem can also be solved through the following optimization problem:
\begin{cor}
	The solution of the following convex optimization problem is given by $z=\{z(k)\}_{k=0}^T$, where $z(k)$  belongs to the set~$\Gamma \subset {\mathbb{R}}^{\mathnormal m}$.
	\begin{equation*}
		\begin{aligned}
		&\minimize_{z \in \Gamma^{T+1}} ~~~ \log L(\textup{exp}[z]) \\
		&	\subjectto  ~~ \log x^\top(k; \textup{exp}[z(k)])\ell(k) < \log \epsilon, \\
		& \quad	\quad \quad \quad \quad ~ \log J(\textup{exp}[z]) \leq \log \bar{J}. 
		\end{aligned}
	\end{equation*}
	Then, the solution of Problem \ref{opt_problem:performance_constrained} is given by 
	\begin{equation*}
	\theta(k)=\textup{exp}[z(k)].
	\end{equation*}
\end{cor}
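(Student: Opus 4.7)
The plan is to mirror the proof of Theorem~\ref{thm_1} while interchanging the roles of the objective functional $L$ and the performance functional $J$. First, I would apply the same logarithmic change of variables $z(k) = \log[\theta(k)]$, equivalently $\theta(k) = \exp[z(k)]$, to every ingredient of the performance-constrained problem stated in Section~\ref{section2_2}. Because $\log$ is strictly monotone on ${\mathbb{R}}_{++}$, the objective $L(\theta)$ is minimized at the same point as $\log L(\exp[z])$, the state constraint $x^\top(k;\theta(k))\ell(k) < \epsilon$ is equivalent to $\log x^\top(k;\exp[z(k)])\ell(k) < \log \epsilon$, and the performance bound $J(\theta) \leq \bar J$ is equivalent to $\log J(\exp[z]) \leq \log \bar J$. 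Hence the transformation is a bijection between the feasible sets that preserves optimality, so any minimizer $z^\star$ of the stated program yields $\theta^\star(k) = \exp[z^\star(k)]$ as a minimizer of Problem~\ref{opt_problem:performance_constrained}.

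Second, I would verify convexity of the transformed program by checking that $L$, the state-linear functional, and $J$ are all posynomials in $\theta$, and then invoking Lemma~\ref{log:convexity}. The posynomiality of $L$ is given directly by Assumption~\ref{asm:K}(3). For the state-dependent quantities, I would use the expansion
\begin{equation*}
\begin{aligned}
x(k;\theta(k)) = (\tilde{A}(k-1) &+ \tilde{K}(k-1;\theta(k-1))) \cdots \\
&\cdot (\tilde{A}(0) + \tilde{K}(0;\theta(0))) x(0),
\end{aligned}
\end{equation*}
in which every entry of $\tilde A(k)$ is a nonnegative constant by Assumption~\ref{K_min} and every entry of $\tilde K(k;\theta(k))$ is a posynomial in $\theta$ by Assumption~\ref{asm:K}(2). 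Since sums and products of posynomials are posynomials, each entry of $x(k;\theta(k))$ is a posynomial in $\theta$; consequently both $x^\top(k;\theta(k))\ell(k)$ (a nonnegative linear combination of posynomials) and the performance functional $J(\theta)$ (expressed via the already-established posynomial form used in the proof of Theorem~\ref{thm_1}) are posynomials in $\theta$.

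Third, I would apply Lemma~\ref{log:convexity} to each of these posynomials, concluding that $\log L(\exp[z])$, $\log J(\exp[z])$, and $\log x^\top(k;\exp[z(k)])\ell(k)$ are all convex functions of $z$. Feasibility of the set $\Gamma$ is then inherited from the posynomial description of $\Theta$ in Assumption~\ref{asm:K}(1) under the same logarithmic transformation. Proposition~\ref{proposition} therefore certifies that the transformed problem is a convex program equivalent to a geometric program, which completes the argument.

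I do not expect a genuine obstacle: the technical content is identical to that of Theorem~\ref{thm_1}, and swapping the role of objective and constraint neither alters the posynomial structure of the ingredients nor affects the log-convexity that underlies the reduction. The only bookkeeping points worth stating explicitly are that the strict inequality in the state constraint is preserved by the monotone log transform, and that the equivalence between the original and transformed problems is a two-way implication on both feasibility and optimality — both of which are immediate.
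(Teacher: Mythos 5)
Your proposal is correct and matches the paper's intent exactly: the paper gives no separate proof of this corollary, simply noting ``likewise'' that it follows from the argument for Theorem~\ref{thm_1} with the roles of $J$ and $L$ interchanged, which is precisely the route you take. Your write-up is in fact more explicit than the paper's (e.g.\ spelling out that swapping objective and constraint preserves the posynomial structure and the two-way equivalence of feasibility and optimality), but the underlying argument is the same.
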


\section{Example: Product Development Management}\label{simulation}
In this section, we illustrate the effectiveness of our proposed framework by solving the dynamic optimal resource allocation problem for the automotive appearance design process in the car manufacturing industry.  

In this paper, we adopt the automotive appearance design example presented in \cite{Yassine2003}, which contains the following tasks: 1) carpet, 2) center console, 3) door trim panel, 4) garnish trim, 5) overhead system, 6) instrument panel, 7) luggage trim, 8) package tray, 9) seats and 10) steering wheel. Suppose there are $T$ development rounds during the process, the dynamic process of the remaining work on each task can be represented by the discrete-time positive linear system $x(k+1) = Ax(k),~k \in \{0, \dotsc, T\}$, where $x(k)$ is the remaining work vector, $A$ is the work transition matrix which is nonnegative. In this paper, we adopt the dynamic model in \cite{Yassine2016}
\begin{equation*}\label{eq:newWTM}
A_k(\phi_k, \gamma_k)=
\begin{bmatrix}
\phi_{1, k}+\Delta_{1}  &\cdots &\prod_{\ell=1}^k \gamma_{1n, \ell} \\        
\prod_{\ell=1}^k \gamma_{21, \ell} &\cdots  &\prod_{\ell=1}^k \gamma_{2n, \ell} \\   
\vdots  & \ddots &\vdots\\
\prod_{\ell=1}^k \gamma_{n1, \ell}  &\cdots  &\phi_{n, k}+\Delta_{n} 
\end{bmatrix},
\end{equation*}
where the value of the off-diagonal entries of the work transition matrix is updated with the accumulated effect in the previous investment rounds ($k-1, k-2, \dotsc , 0$). 
$\phi_k=\{\phi_{1,k},\dotsc,\phi_{n,k}\}$ represents the adjustable work efficiency of the task, while $\gamma_k=\{\gamma_{ij,k}\}, (i,j=1, \dotsc , n, i \neq j)$ are the off-diagonal entires of $A_k(\phi_k, \gamma_k)$ which represent the ratio of the extra work transferred among the tasks with progress. Furthermore, we let $\Delta_{:, k}$ to represent the abrupt change on $\phi_k$ (e.g., equipment fault, conflict on schedule or the absence of engineer). During the intermittence of the development process, the managers allocate a fixed amount of resource to prompt the development process (i.e., tuning the parameter of work transition matrix).  
We assume that the parameters can be tuned within the following intervals:
\begin{equation*}
0<\phi_{i, k}^{\textup{min}} \leq \phi_{i, k} \leq \phi_{i, k}^{\textup{max}},\quad 0<\gamma_{ij, k}^{\textup{min}} \leq \gamma_{ij, k} \leq \gamma_{ij, k}^{\textup{max}}.
\end{equation*}
Specifically, the resource can be allocated on the tasks (i.e., diagonal entries of $A_k(\phi_k, \gamma_k)$) to promote the efficiency, or on the off-diagonals to reduce the ratio of the generated work among the related tasks. 
Furthermore, suppose that the initial value of $A_k(\phi_k, \gamma_k)$ is given by $\phi_{i, k}^{\textup{max}}, \gamma_{ij, k}^{\textup{max}}$, we have to pay $f_i(\phi_{i, k})$ unit of cost for tuning the work efficiency of module $i$ from $\phi_{i, k}^{\textup{max}}$ to $\phi_{i, k}$. Likewise, we let the cost for tuning $\gamma_{ij, k}$ equal to $g_{ij}(\gamma_{ij, k})$. The total cost for the $k$th investment round is calculated by taking the sum of the cost in all the entries of $A_k(\phi_k, \gamma_k)$:
\begin{equation}\label{sim:cost}
B_k(\phi_{k}, \gamma_{k})=\sum_{i,j=1}^{n} (f_i(\phi_{i, k})+g_{ij}(\gamma_{ij, k})).
\end{equation}
Usually, a dynamic product development process contains dozens or hundreds of tasks and several investment rounds. Moreover, from the discussion in Section \ref{main_result}, the dynamic resource allocation problem for product development process is also a nonlinear optimization problem.
Thus, finding the optimal strategy is a difficult problem which can not easily be solved by the experience based method. For checking the satisfaction for Assumption \ref{K_min}, we can see that $A_k(\phi_k, \gamma_k),~k \in \{0, \dotsc, T\}$ is a sequence of nonnegative matrices with the directly tuning parameters $\phi_k, \gamma_k$ belonging to positive numbers. Then, by utilizing the knowledge in \cite{ref4}, the cost function \eqref{sim:cost} can be modeled with posynomials. Thus, the problem satisfies the assumptions and definitions in our theorem. By using Theorem \ref{thm_1}, we can transform the optimal resource allocation problem of automotive appearance design process into the finite-time control problem for positive linear system.

\begin{table*}[]
	\centering
	\caption{Work transition matrix of automotive appearance design}\label{tab:WTM_case_2}

	\begin{tabular}{c|cccccccccc|}
		\multicolumn{1}{c}{ } &\multicolumn{1}{c}{$A_{0,1}$} & \multicolumn{1}{c}{$A_{0,2}$} &\multicolumn{1}{c}{$A_{0,3}$} & \multicolumn{1}{c}{$A_{0,4}$} &
		\multicolumn{1}{c}{$A_{0,5}$} &\multicolumn{1}{c}{$A_{0,6}$} &
		\multicolumn{1}{c}{$A_{0,7}$} &\multicolumn{1}{c}{$A_{0,8}$} & \multicolumn{1}{c}{$A_{0,9}$} &
		\multicolumn{1}{c}{$A_{0,10}$}  \\
		\cline{2-11}
		\multirow{1}{*}{$A_{0,1}$} & 0.85 &0.12 & 0.02& 0.06&0.06 & & & &0.06 & \\
		
		\multirow{1}{*}{$A_{0,2}$} &0.1 & 0.53 & 0.04& & & 0.3& 0.02& &0.24 &0.02 \\
		
		\multirow{1}{*}{$A_{0,3}$} & 0.02& 0.04 &0.47 &0.08& &0.24 &0.02& &0.18&0.02 \\
		
		\multirow{1}{*}{$A_{0,4}$} &0.06 &  &0.18 &0.68& &0.14 &0.1& 0.02&0.08& \\
		
		\multirow{1}{*}{$A_{0,5}$} &0.04 &  & && 0.83& & & & &  \\
		
		\multirow{1}{*}{$A_{0,6}$} & &0.3 &0.26 &0.16& &0.28 &0.06 & & 0.02&0.2 \\
		
		\multirow{1}{*}{$A_{0,7}$} & & 0.02 &0.02 &0.1& &0.06 &0.76&0.06 &0.04& \\
		
		\multirow{1}{*}{$A_{0,8}$} & &  & &0.1 & & &0.06&0.83 &0.16&\\
		
		\multirow{1}{*}{$A_{0,9}$} &0.08 &0.24  &0.18 &0.08& &0.04 &0.04&0.16 &0.63&0.2 \\
		
		\multirow{1}{*}{$A_{0,10}$} & & 0.02 &0.02 & & &0.26 & & &0.2&0.7\\
		\cline{2-11}
	\end{tabular}	
\end{table*}

In our case study,  we select the performance-constrained problem, which aims at minimizing the total investments while satisfying the constraint on the total remaining work. For the problem initialization, we unify the initial value of remaining work with $x(0)_i=1,~(i=1, \dotsc, 10)$ (i.e., all the tasks at the beginning of development process have $100$$\%$ work remained). We set the investment rounds $T=5$ and take the sum of the remaining work after the final investment round $\sum_{i=1}^{n}x_i(T)$ as performance evaluation. Furthermore, we set the constraint value of the total remaining work with $0.001\times\sum_{i=1}^{n}x_i(0)$ (i.e., the remaining work is $0.1$$\%$ of the beginning) for judging the accomplish of process. The initial value of $A_k(\phi_k, \gamma_k)$ is given in Table \ref{tab:WTM_case_2}. Let the entries of $A_k(\phi_k, \gamma_k)$ be tuned within the  interval $[0.1, 1]$ (i.e., the component can be accelerated between $[0$$\%-90$$\%]$). Finally, we let the variance on the efficiency of each task $\Delta_{i}$ varies between $[-0.2, 0.2]$. For the finite-time stability constraint in \eqref{eq:opt_prob:B}, we set $\ell(k)=\eta(k) x(0)$, where $\eta(k)=e^{-k}, (k=1, \dotsc, 5)$, and $\epsilon=1$.
For the cost function, we adopt the following posynomial function:
\begin{equation*}\label{cost:example}
f_{ij}(\gamma_{ij})=c_{ij}\left(\frac{1}{(\gamma_{ij})^p}-\frac{1}{(\Omega_{ij})^p}\right),
\end{equation*} 
where $p>0$ is the parameter for tuning the shape of cost function, and $c_{ij}, \Omega_{ij}$, $i,j=\{1, \dotsc, 10 \}$ are positive numbers for fitting the data. For simplicity,
we unify the parameters of all cost functions with $c_{ij}=1$, $p=1$, $\Omega_{ij}=1$. In this case, for example, if $\gamma_{ij}=1$ (i.e., the corresponding entry in $A_k(\phi_k, \gamma_k)$ is not tuned), then $f_{ij}(\gamma_{ij})=0$ which means the cost is $0$. 

Fig.~\ref{fig:Remaining_work} shows that despite satisfying the object function, the dashed line does not exceed the prescribed boundedness (i.e., the designed strategy meets the constraint of finite-time stability). Through solving the convex optimization problem, we are sure to get the optimal decision variables. However, from Fig.~\ref{fig:Module} and Fig.~\ref{fig:DRs}, we can get the trends of the decision variables $\gamma$ and $\phi$, which means that the manager can foresee the trends before the process is put into effect. The information from Fig.~\ref{fig:Module} and Fig.~\ref{fig:DRs} is especially useful for the stage of product development system design, where the manager can modify the structure of the work transition matrix based on the technology of management engineering to improve the performance of the product development system via the earlier design approach.

\begin{figure}
	\begin{center}
		\includegraphics[width=8.4cm]{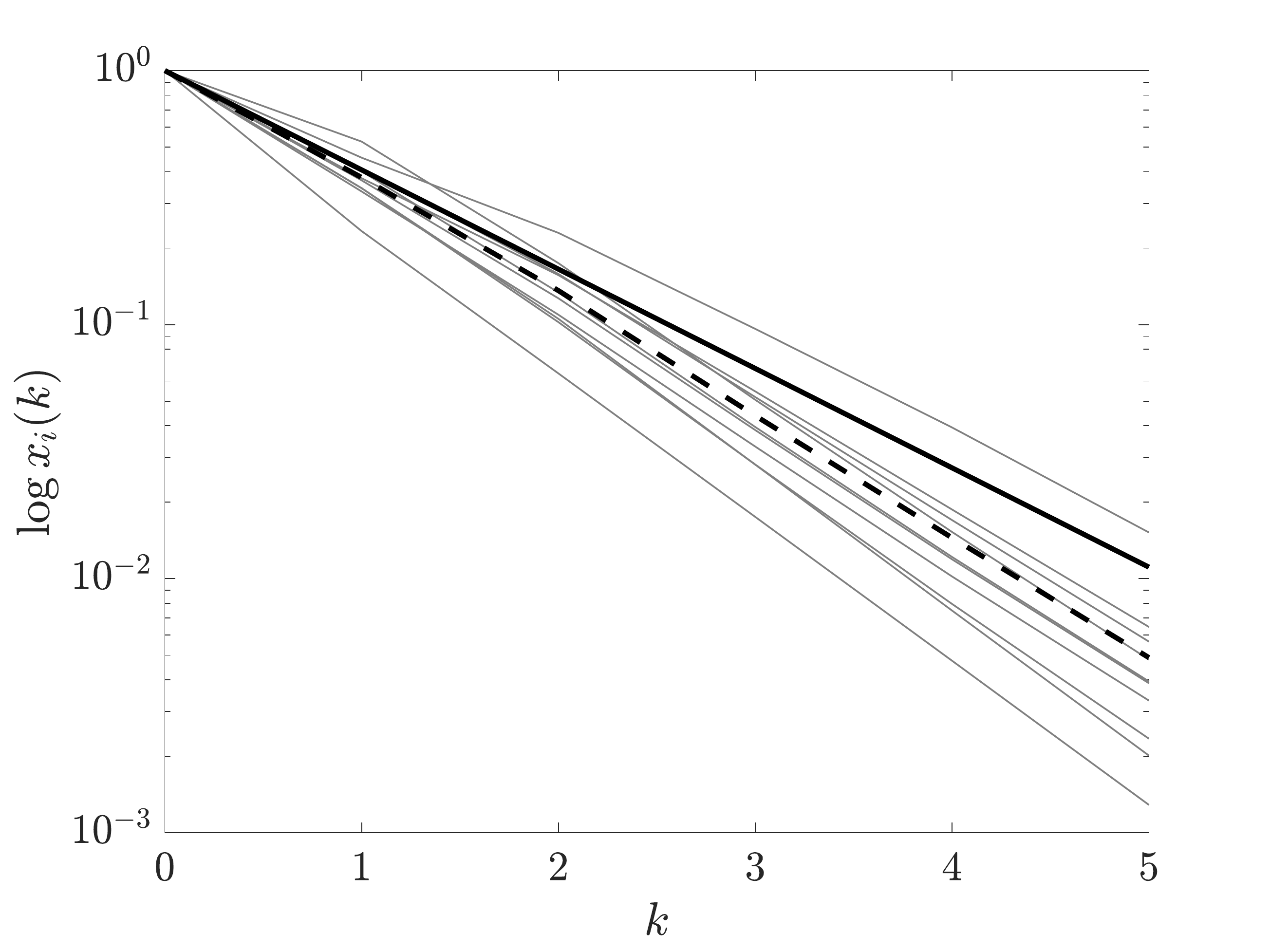}    
		\caption{Gray line: $\log x_i(k)$; Solid line: finite-time stability constraint; Dashed line: average value of $\log x_i(k)$.} 
		\label{fig:Remaining_work}
	\end{center}
\end{figure}

\begin{figure}
	\begin{center}
		\includegraphics[width=8.4cm]{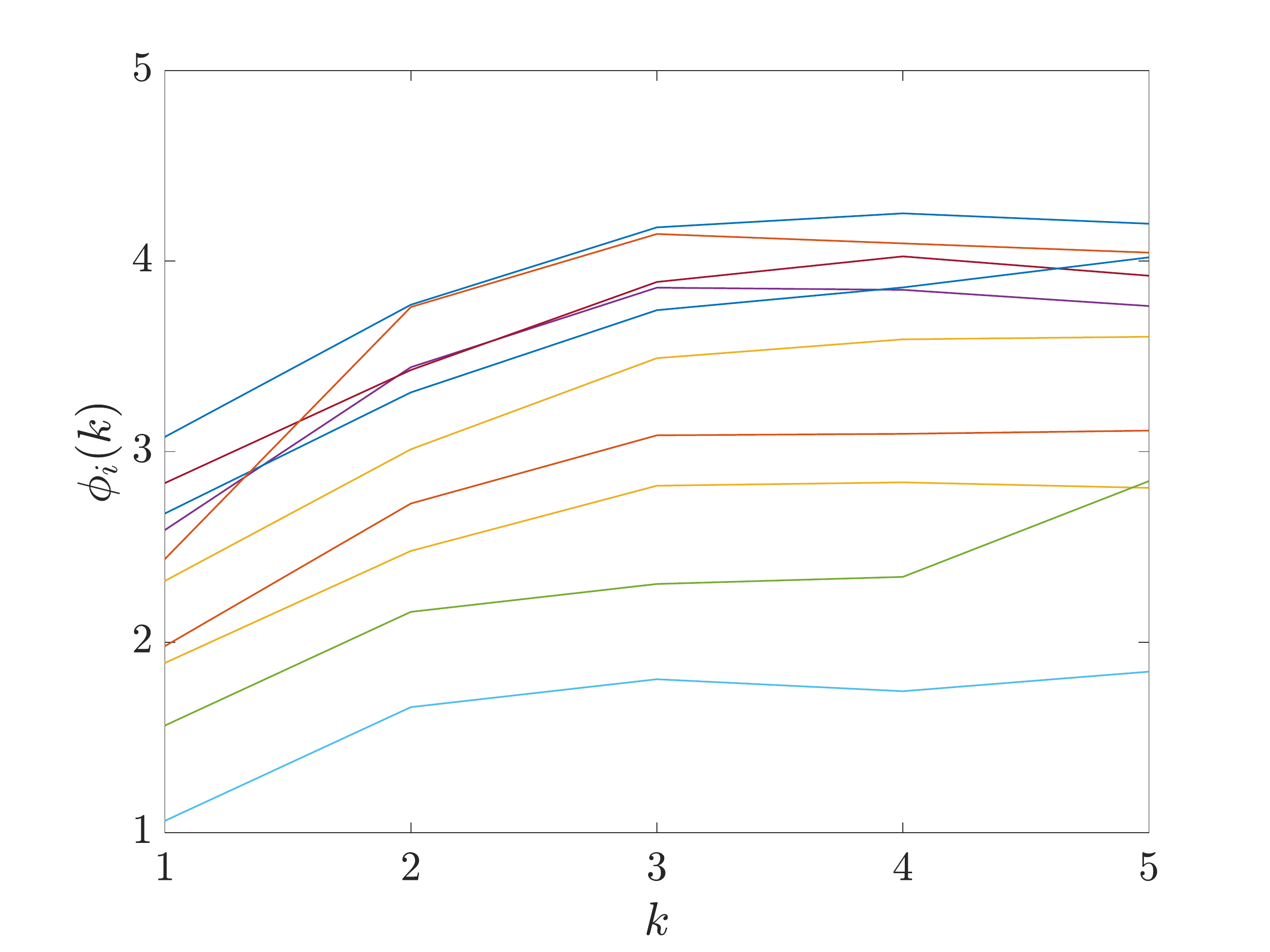}    
		\caption{The investments in $\phi_i$ versus investment round $k$.} 
		\label{fig:Module}
	\end{center}
\end{figure}

\begin{figure}
	\begin{center}
		\includegraphics[width=8.4cm]{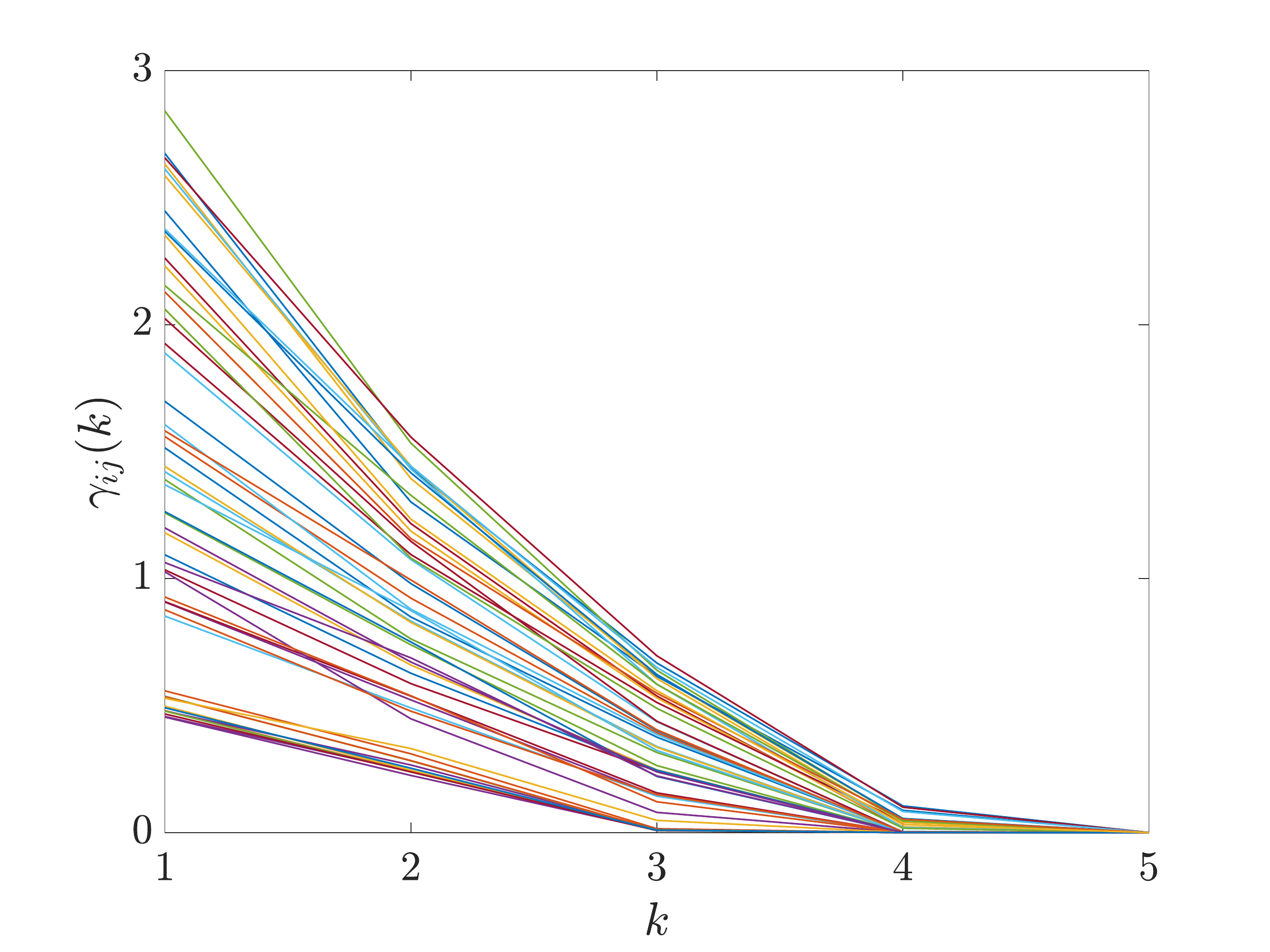}    
		\caption{The investments in $\gamma_{ij}$ versus investment round $k$.} 
		\label{fig:DRs}
	\end{center}
\end{figure}

\section{Conclusion}

In this paper, we have studied a class of finite-time control problems for the discrete-time time-varying positive linear systems constrained by the parameter tuning cost. By utilizing the convexity property of posynomial functions, we have shown that the finite-time control problem can be transformed into a convex optimization problem. Finally, we have illustrated the effectiveness of our framework by a numerical simulation on product development processes. In the future work, one of the possible extension of our work is to consider the time-delay effect; especially, if the parameters of the PD system are updated after a certain period. Then, the investment decision making problem becomes a more general situation.

%
\newcommand{\newblock}{}
\bibliographystyle{unsrt}

\end{document}